\newcommand{\PP}{\mathbb{P}}
\newcommand{\innerp}[1]{\langle {#1} \rangle}
\newcommand{\abs}[1]{\lvert#1\rvert}
\newcommand{\argmin}[1]{\mathop{\rm argmin}\limits_{#1}}
\newcommand{\E}{\mathbb E}
\newcommand{\R}{{\mathbb R}}
\renewcommand{\eqref}[1]{(\ref{#1})}
\newcommand{\mhsp}{\hspace{2em}}
\newtheorem{prop}{Proposition}[section]
\newtheorem{lem}[prop]{Lemma}
\newtheorem{defi}{Definition}[section]
\newtheorem{theo}[prop]{Theorem}
\newtheorem{remark}[prop]{Remark}
\date{}
\begin{document}
\bibliographystyle{plain}
\title[A strong RIP  with an application to phaseless compressed sensing]{A strong restricted isometry property, with an application to phaseless compressed sensing.}

\author{Vladislav Voroninski}
\thanks{
       Zhiqiang Xu  is supported by the National Natural Science Foundation of China (11171336 and 11331012) and by the Funds for Creative
       Research Groups of China (Grant No. 11021101).
       }
\address{Massachusetts Institute of Technology. Department of Mathematics. 77 Massachusetts Avenue
Cambridge, MA 02139-4307.}
\email{vvlad@math.mit.edu}

\author{Zhiqiang Xu}
\address{LSEC, Inst.~Comp.~Math., Academy of
Mathematics and System Science,  Chinese Academy of Sciences, Beijing, 100091, China}
\email{xuzq@lsec.cc.ac.cn}

\maketitle

\begin{abstract}
The many variants of the restricted isometry property (RIP) have proven to be crucial theoretical tools in the fields of compressed sensing and  matrix completion. The study of extending compressed sensing to accommodate phaseless measurements naturally motivates a strong notion of restricted isometry property (SRIP), which we develop in this paper. We show that if $A \in \mathbb{R}^{m\times n}$ satisfies SRIP and phaseless measurements $|Ax_0| = b$ are observed about a $k$-sparse signal $x_0 \in \mathbb{R}^n$, then minimizing the $\ell_1$ norm subject to $ |Ax| = b $ recovers $x_0$ up to multiplication by a global sign. Moreover, we establish that the SRIP holds for the random Gaussian matrices typically used for standard compressed sensing, implying that phaseless compressed sensing is possible from $O(k \log (n/k))$ measurements with these matrices via $\ell_1$ minimization over $|Ax| = b$. Our analysis also yields an erasure robust version of the Johnson-Lindenstrauss Lemma.
\end{abstract}

\vspace{0.4cm}

{\bf Keywords.}  signal recovery; phase retrieval; compressed sensing; johnson-lindenstrauss lemma; restricted isometry property, compressed phaseless sensing.

\section{Introduction}
\setcounter{equation}{0}

The restricted isometry property (RIP), first introduced by Cand\`es  and Tao \cite{CanTao05},
is one of the most commonly used tools in the study of sparse/low rank signal recovery problem.  The RIP also has some connections to the Johnson-Lindenstrauss lemma and its study has lead to new results about the lemma \cite{RIPJL, DeVoreRIP}.
The aim of this paper is to present  a strong restricted isometry property  which naturally occurs   when considering phaseless compressed sensing.

Given a vector $x_0 \in \R^n$ and a collection of phaseless measurements $b_j= \abs{\innerp{a_j, x_0}}, j = 1,2,\ldots, m$ where $a_j\in \R^n$, {\em phase retrieval} consists of recovering $x_0$ up to a global sign,
which has attracted much attention in recent years (cf. \cite{BCE06, BBCE07, CSV12, CESV12}).
 It has been established that the PhaseLift algorithm allows for stable and efficient phase retrieval of an arbitrary signal $x_0 \in \mathbb{C}^n$ from $O(n)$ measurements via semi-definite programming \cite{CSV12}.

In practice, signals of interest are often sparse in some basis and in particular this occurs in some regimes of X-ray crystallography. It is natural to exploit this sparsity structure to minimize the number of measurements needed for recovery since measurement acquisition is expensive and can destroy the sample at hand. We define {\em phaseless compressed sensing} (PCS) as the problem of recovering a sparse signal from few such phaseless measurements. It was shown in \cite{LiV} that a $k$-sparse signal $x_0$ can be recovered from $O(k^2 \log n)$ phaseless measurements via convex programming. Surprisingly, and in contrast to the case of compressed sensing from linear measurements, it was also established in \cite{LiV} that the natural information theoretic lower-bound of $O(k \log n)$ measurements cannot be achieved using a naive semi-definite programming relaxation.

Meanwhile, phaseless measurements are generically injective modulo phase over $k$-sparse signals as soon as the over-sampling factor is $2$ \cite{wangxu, LiV}. Thus, the combinatorially hard problem of finding
\begin{equation}\label{eq:l0}
\argmin{x\in \R^n}\{\|x\|_0 \; \mbox{  subject to } \;  \abs{\innerp{a_j,x}}=\abs{\innerp{a_j,x_0}}, j=1,\ldots,m\}
\end{equation}
 yields $x_0$ modulo phase for $m \geq 2k-1$ and $a_j$  generic (see \cite{wangxu}). Moreover, $O(k \log( n/k))$ random Gaussian phaseless measurements separate signals well \cite{Eldar}. While minimizing sparsity in the $O(k \log (n/k))$ measurement regime is not clearly amenable to efficient algorithmic recovery,
  numerical experiments show that  using a convex relaxation of the $\ell_0$ ``norm" is often exact \cite{algorithm1,algorithm2,algorithm3}. We study here such a relaxation:
\begin{equation}\label{eq:l1}
\hat{x}:=\argmin{x \in \R^n} \{\|x\|_1 \; \mbox{  subject to } \; \abs{\innerp{a_j,x}} = \abs{\innerp{a_j,x_0}}, j=1,\ldots,m \}.
\end{equation}
In this paper, we focus on the case where $x_0\in \R^n$ is a $k$-sparse real signal and $A=[a_1,\ldots,a_m]^\top \in  \R^{m\times n}$.
 Particularly, we  show that (\ref{eq:l0}) is equivalent to (\ref{eq:l1}) provided that the matrix $A$ satisfies the strong restricted isometry property.
We furthermore show that a random $m \times n$ Gaussian matrix with $m=O(k \log( n/k))$ satisfies the strong restricted isometry property of order $k$ with
high probability. And hence, the $\ell_1$ relaxation is exact with high probability with $O(k \log (n/k))$ Gaussian measurements, just as in traditional compressed sensing. While the constraint set of our relaxation is non-convex, this formulation is more amenable to algorithmic recovery, and has been studied in \cite{algorithm1,algorithm2,algorithm3} with demonstrated empirical success of the proposed projection algorithms.

The paper is organized as follows.
In Section 2, we introduce the definition of strong restricted isometry property (SRIP) and
show that if a matrix $A$ satisfies SRIP, then (\ref{eq:l0}) is equivalent to (\ref{eq:l1}). Then, we show that $m \times n$ random Gaussian matrices typically used for linear compressed sensing, also satisfy the SRIP with high probability, which establishes the exactness of the $\ell_1$ relaxation for $m = O(k \log n/k)$ phaseless  measurements.
 Section 3 provides some technical necessities. We present a strong version of the
 concentration of measure inequality in Section 4
 which plays an important role in proving the main  results of Section 2.
 Using this inequality, we derive a strong J-L lemma which
 states that we get a dimensionality reduction of a set of points in Euclidean space with some distance distortion such that erasing a positive fraction of the coordinates of the reduction maintains to a certain degree the approximate distance preserving J-L property. This can be interpreted as an erasure robust version of the J-L Lemma. We present the proofs of the main results in Section 5 and finally give some future research directions in Section 6.

\section{Main results}

We say that a matrix $A=[a_1,\ldots,a_m]^\top \in  \R^{m\times n}$ satisfies the Restricted Isometry Property (RIP) of order $k$ and constant $\delta_k \in [0,1)$ if
\begin{equation}\label{eq:con}
(1-\delta_k) \|x\|_2^2 \leq \|Ax\|_2^2 \leq (1+\delta_k) \|x\|_2^2
\end{equation}
holds for all $k$-sparse signals $x\in \R^n$ (see \cite{CanTao05}). Matrices which satisfy the RIP property play an important role in analyzing the performance of $\ell_1$ minimization.
Particularly, in \cite{candes,tao}, it is shown that,  with certain RIP constants $\delta_k$,
 for any $k$-sparse vector $x_0\in \R^n$
\begin{equation*}
   \argmin{x\in \R^m}\{\|x\|_1:~A x = Ax_0\} = \{x_0\}.
\end{equation*}

For convenience, let $[m]:=\{1,\ldots,m\}$. Now we introduce the definition of SRIP:
\begin{defi}
We say the matrix $A\in \R^{m\times n}$  satisfies the Strong Restricted Isometry Property (SRIP) of order $k$ and levels $\theta_-, \theta_+
\in (0,2)$
if
\[
\theta_-\|x\|_2^2 \leq \min_{I\subseteq [m], \abs{I} \geq m/2}\|A_Ix\|_2^2 \leq \max_{I\subseteq [m], \abs{I} \geq m/2} \|A_I x\|_2^2 \leq \theta_+\|x\|_2^2
\]
holds for all $k$-sparse signals $x\in \R^n$.  Here $A_{I}:=[a_j:j\in I]^\top$ denotes the sub-matrix of $A$ where only rows with
indices in $I$ are kept.
\end{defi}
It is well-known that an $m\times n$ Gaussian matrix with $m=O(k\log(n/k))$ satisfies the RIP property of order $k$ with high probability. We establish below that the SRIP also holds for random Gaussian matrices of the same size with high probability, for constants $\theta_-, \theta_+$ which are necessarily bounded below by some non-zero universal constant. We have the following result:
\begin{theo}\label{th:SRIP}
Suppose that $t>1$ and that $A\in \R^{m\times n}$ is a random Gaussian matrix with $m=O(tk\log (n/k))$. Then
there exist constants $\theta_-, \theta_+$ with $0<\theta_-<\theta_+<2$, independent of t, such that
$A$ satisfies SRIP of order $t\cdot k$ and levels $\theta_-, \theta_+$ with probability $1-\exp(-cm/2)$, where $c>0$ is an absolute constant
.
\end{theo}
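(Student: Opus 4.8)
The plan is to reduce the SRIP to a uniform concentration bound over all $k$-sparse unit vectors $x$ and all large coordinate subsets $I$, and then apply a standard net-plus-union-bound argument. First I would fix a $k$-sparse unit vector $x \in \R^n$ and consider the random variable $\|A_I x\|_2^2 = \sum_{j \in I} \langle a_j, x\rangle^2$. The summands $\langle a_j, x\rangle^2$ are i.i.d.\ $\chi^2_1$ random variables (after the usual normalization of $A$), so for any fixed $I$ with $|I| \ge m/2$ the quantity $\|A_I x\|_2^2$ concentrates around $|I|/m$ (in the normalization where $\E\|Ax\|_2^2 = \|x\|_2^2$). The key point, and the reason we only get constants $\theta_-, \theta_+$ bounded away from $1$ rather than close to $1$, is that the \emph{worst} $I$ of size $\lceil m/2 \rceil$ is obtained by discarding the $\lceil m/2\rceil$ largest (resp.\ smallest) values of $\langle a_j,x\rangle^2$; this is an order-statistics / truncated-moment computation for the $\chi^2_1$ distribution and yields deterministic constants $\theta_\pm^{(0)}$ with $0 < \theta_-^{(0)} < 1 < \theta_+^{(0)} < 2$ that the typical value of $\min_I \|A_I x\|_2^2$ and $\max_I \|A_I x\|_2^2$ approach. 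This is precisely the content of the ``strong concentration of measure inequality'' promised for Section 4, which I would invoke here: for a fixed $k$-sparse unit $x$,
\[
\Pr\!\left[\min_{|I|\ge m/2}\|A_I x\|_2^2 < \theta_- \ \text{ or }\ \max_{|I|\ge m/2}\|A_I x\|_2^2 > \theta_+\right] \le \exp(-cm),
\]
for suitable absolute constants $\theta_-<\theta_+$ in $(0,2)$ and $c>0$. Note the number of subsets $I$ is at most $2^m$, so a union bound over $I$ inside this estimate already costs a factor $2^m = \exp((\log 2) m)$, which is why the exponent on the right must beat $(\log 2)m$ — this forces $\theta_\pm$ away from $1$ but is harmless for the final statement.

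Next I would run the standard covering argument to pass from a fixed $x$ to all $tk$-sparse vectors. For a fixed support $S \subseteq [n]$ with $|S| = tk$, take an $\varepsilon$-net $\NN_S$ of the unit sphere of $\R^S$ of size at most $(1 + 2/\varepsilon)^{tk}$ (with $\varepsilon$ a small absolute constant to be chosen). Union-bounding the displayed inequality over $x \in \NN_S$ and over all $\binom{n}{tk} \le (en/(tk))^{tk}$ choices of $S$, the failure probability is at most
\[
\binom{n}{tk}\,(1+2/\varepsilon)^{tk}\,\exp(-cm) \ \le\ \exp\!\big(tk\log(en/(tk)) + tk\log(1+2/\varepsilon) - cm\big),
\]
which is $\le \exp(-cm/2)$ once $m \ge C\, tk\log(n/(tk))$ for a sufficiently large absolute constant $C$ — this is where the hypothesis $m = O(tk\log(n/k))$ is used. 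Finally I would upgrade the net bound to a bound on the whole sphere: writing any unit $x$ supported on $S$ as $x = y + z$ with $y \in \NN_S$ and $\|z\|_2 \le \varepsilon$, and using that $z/\|z\|_2$ is again $tk$-sparse, a short telescoping / triangle-inequality argument in the $\max_I$ and $\min_I$ functionals (both of which are sublinear in the appropriate sense, since $\|A_I(\cdot)\|_2$ is a seminorm for each $I$) absorbs the $\varepsilon$-perturbation at the cost of slightly worsening $\theta_\pm$ to new absolute constants still in $(0,2)$. The independence from $t$ of the final $\theta_\pm$ follows because $t$ enters only through the dimension count $tk\log(n/(tk))$ absorbed by enlarging $m$, never through the per-vector concentration constants.

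The main obstacle is the per-vector strong concentration inequality with a subset union bound baked in — i.e.\ establishing that $\exp(-cm)$ with $c > \log 2$ survives the $2^m$ choices of $I$. The naive Chernoff bound for a single fixed $I$ gives $\exp(-c_0 \delta^2 m)$ for a deviation $\delta$, and to beat $\log 2$ over all $I$ one must instead exploit the order-statistics structure: discarding the top half of the $\chi^2_1$ values moves the conditional mean by a fixed amount, and the relevant large-deviation rate for the truncated sum staying on the ``wrong side'' of $\theta_\pm$ can be made larger than $\log 2$ by choosing $\theta_-$ small enough and $\theta_+$ close enough to $2$. Making this quantitative — i.e.\ pinning down admissible $\theta_\pm$ and verifying the rate function inequality — is the technical heart of the argument and is what Section 4 is set up to deliver; here I would simply cite that inequality and assemble the covering argument around it.
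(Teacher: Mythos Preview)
Your covering/net argument and the final union bound over supports and net points match the paper's proof essentially line for line, so that part is fine.

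Where your proposal diverges from the paper --- and in a way that creates unnecessary difficulty --- is the per-vector concentration step. You frame it as a union bound over the $2^m$ subsets $I$, and then worry about pushing the Chernoff rate above $\log 2$ by driving $\theta_-$ toward $0$ and $\theta_+$ toward $2$. The paper avoids this entirely. The observation is that for fixed $x$,
\[
\min_{|I|\ge m/2}\|A_I x\|_2^2 \;=\; \sum_{j=1}^{\lceil m/2\rceil} |y|_{(j)}^2 \;=\; F(y)^2,
\]
where $y=Ax$ and $F:\R^m\to\R$ is the square root of the sum of squares of the $\lceil m/2\rceil$ smallest absolute coordinates. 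This is a \emph{single} $1$-Lipschitz function of the Gaussian vector $y$ (Lemma~\ref{le:Lip}), so Gaussian concentration (Theorem~\ref{th:Lip}) gives $\exp(-mt^2/2)$ fluctuations around its mean $\sqrt{m}\,\mu_m$, with no union bound over $I$ whatsoever. The remaining work is to lower-bound the mean $\mu_m$ by an absolute constant $\nu_0>0$ via order statistics (Lemma~\ref{le:mun} and Theorem~\ref{th:min}). So any $c>0$ suffices for the per-vector bound; the constraint $c>\log 2$ you impose is spurious, and your concern about ``beating $\log 2$'' in the large-deviation rate is a red herring.

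A smaller correction: the upper bound is trivial, since $\max_{|I|\ge m/2}\|A_I x\|_2^2 \le \|Ax\|_2^2$ and standard RIP handles the latter. Thus $\theta_+$ can be taken as $1+\epsilon$ for any fixed $\epsilon>0$, not ``close to $2$'' as you suggest.
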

The proof of Theorem \ref{th:SRIP} is inspired by the method introduced in \cite{DeVoreRIP},
 which combines  a strong concentration of measure inequality and a simple covering argument.
The next theorem shows that one can use $\ell_1$ minimization to recover sparse signals from the phaseless measurements provided $A$
satisfies SRIP.
\begin{theo}  \label{theo-3.3}
 Assume that $A\in \R^{m\times n}$ satisfies the Strong RIP of order $t\cdot k$ and levels $\theta_-,\theta_+$ with $t\geq \max\{\frac{1}{2\theta_--\theta_-^2}, \frac{1}{2\theta_+-\theta_+^2}\}$.
Then for any $k$-sparse signal $x_0\in \R^n$ we have
\begin{equation} \label{3.4}
   \argmin{x\in \R^n}\{\|x\|_1:~\abs{A x} = \abs{A x_0}\} = \{\pm x_0\},
\end{equation}
where $\abs{Ax}:=[\abs{\innerp{a_j,x}}:j\in [m]]$ and $[m]:=\{1,\ldots,m\}$.
\end{theo}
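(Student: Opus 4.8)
The plan is to show that any feasible point $\hat x$ for the $\ell_1$ program satisfies $\|\hat x\|_1 \geq \|x_0\|_1$, with equality forcing $\hat x = \pm x_0$. Since $|A\hat x| = |Ax_0|$, for each $j$ we have $\innerp{a_j, \hat x} = \epsilon_j \innerp{a_j, x_0}$ for some sign $\epsilon_j \in \{+1,-1\}$. Set $I := \{j \in [m] : \epsilon_j = +1\}$, so on $I$ we have $A_I \hat x = A_I x_0$, i.e. $A_I(\hat x - x_0) = 0$; on the complement $[m]\setminus I$ we have $A_{[m]\setminus I}(\hat x + x_0) = 0$. One of $I$ or $[m]\setminus I$ has size at least $m/2$; by symmetry of the argument, say $|I| \geq m/2$ (the other case replaces $x_0$ by $-x_0$ and yields $\hat x = -x_0$). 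So the vector $v := \hat x - x_0$ lies in the kernel of a row-submatrix $A_I$ with $|I| \geq m/2$.

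Next I would run the standard RIP-style null-space argument, but with the SRIP giving us control over $\|A_I v\|_2$ rather than $\|Av\|_2$. Let $T_0 = \supp(x_0)$, so $|T_0| \le k$. Partition $T_0^c$ into blocks $T_1, T_2, \ldots$ of size $tk$ (the first block $T_1$ holding the $tk$ largest-magnitude coordinates of $v$ on $T_0^c$, and so on). Using $\|v_{T_j}\|_2 \le (tk)^{-1/2}\|v_{T_{j-1}}\|_1$ for $j \ge 2$ we get $\sum_{j \ge 2}\|v_{T_j}\|_2 \le (tk)^{-1/2}\|v_{T_0^c}\|_1$. Because $A_I v = 0$ we have $A_I v_{T_0 \cup T_1} = -\sum_{j\ge 2} A_I v_{T_j}$. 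Applying the SRIP lower bound on the left (the vector $v_{T_0\cup T_1}$ is $(k + tk)$-sparse, hence $(t+1)k$-sparse, and still $2tk$-sparse since $t \ge 1$ — here I would be slightly careful about sparsity orders, possibly using order $t\cdot k$ with the convention that $T_0\cup T_1$ fits, or invoking SRIP of the stated order directly on blocks of size $tk$) and the SRIP upper bound on each term on the right, one obtains
\[
\theta_- \|v_{T_0\cup T_1}\|_2^2 \le \|A_I v_{T_0\cup T_1}\|_2^2 = \Big\|\sum_{j\ge 2} A_I v_{T_j}\Big\|_2^2,
\]
and after expanding and bounding via Cauchy–Schwarz and the $\theta_+$ upper bound, this funnels into an inequality of the form $\theta_-\|v_{T_0}\|_2 \lesssim \theta_+ (tk)^{-1/2}\|v_{T_0^c}\|_1$. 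The quantitative hypothesis $t \ge \max\{1/(2\theta_- - \theta_-^2), 1/(2\theta_+-\theta_+^2)\}$ is exactly what makes the resulting constant strictly less than $1$, so we get the null-space property $\|v_{T_0}\|_1 \le \sqrt{k}\,\|v_{T_0}\|_2 < \|v_{T_0^c}\|_1$ for every nonzero $v \in \ker A_I$.

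From here the conclusion is the classical one: $\|x_0\|_1 \le \|\hat x\|_1$ because
\[
\|x_0\|_1 - \|\hat x\|_1 \le \|v_{T_0}\|_1 - \|v_{T_0^c}\|_1 \le 0,
\]
using $\|\hat x\|_1 = \|x_0 + v\|_1 \ge \|x_0\|_1 - \|v_{T_0}\|_1 + \|v_{T_0^c}\|_1$. Since $\hat x$ is the minimizer, $\|\hat x\|_1 = \|x_0\|_1$, which forces $v_{T_0^c} = 0$ in the strict inequality above unless $v = 0$; tracing back, $v = 0$, i.e. $\hat x = x_0$ (or $\hat x = -x_0$ in the symmetric case). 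I would also note that $x_0$ and $-x_0$ are themselves feasible, so the argmin is exactly $\{\pm x_0\}$. The main obstacle is bookkeeping the sparsity levels: the kernel lives in $A_I$ with $|I|\ge m/2$, which is precisely where SRIP (rather than ordinary RIP) is needed, and one must make sure the block sizes and the order $t\cdot k$ in the SRIP hypothesis line up so that both the $\theta_-$ lower bound on $v_{T_0\cup T_1}$ and the $\theta_+$ upper bounds on the tail blocks $v_{T_j}$ are legitimately applicable, and that the algebra genuinely yields a contraction constant below $1$ under the stated bound on $t$.
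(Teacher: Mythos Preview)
Your high-level reduction --- splitting $[m]$ according to whether $\innerp{a_j,\hat x}$ and $\innerp{a_j,x_0}$ agree or disagree in sign, taking the half $I$ of size at least $m/2$, and then arguing about $\ker A_I$ --- is exactly what the paper does. The divergence is in what comes next.

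The paper does not rerun a block-decomposition null-space argument. It simply observes that SRIP of order $tk$ and levels $\theta_-,\theta_+$ says precisely that every $A_I$ with $|I|\geq m/2$ satisfies the \emph{ordinary} RIP of order $tk$ with constant $\delta_{tk}\leq \max\{1-\theta_-,\ \theta_+-1\}$. The hypothesis $t\geq \max\{1/(2\theta_--\theta_-^2),\ 1/(2\theta_+-\theta_+^2)\}$ is nothing but a rewriting of $\max\{1-\theta_-,\ \theta_+-1\}<\sqrt{1-1/t}$, which is the Cai--Zhang RIP condition. The paper then invokes that result as a black box to conclude that $x_0$ is the unique $\ell_1$ minimizer over $\{x: A_Ix=A_Ix_0\}$, and the contradiction follows immediately.

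Your direct attempt has two genuine obstacles. First, the sparsity bookkeeping you flag is not cosmetic: $v_{T_0\cup T_1}$ is $(t+1)k$-sparse while you only have SRIP of order $tk$, so the $\theta_-$ lower bound is not applicable to it as written. Second, and more importantly, your claim that the stated bound on $t$ ``is exactly what makes the resulting constant strictly less than $1$'' is not what the Cand\`es-style block argument you sketch actually delivers. The threshold $\delta_{tk}<\sqrt{1-1/t}$ is strictly sharper than what comes out of the naive Cauchy--Schwarz/triangle-inequality chain on blocks; it requires the polytope-representation technique of Cai--Zhang. So unless you reproduce that analysis in full, the algebra will not close at the stated constant, and the theorem as written will not follow from your outline. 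The fix is the paper's shortcut: recognize that SRIP hands you ordinary RIP for every admissible $A_I$, and cite the existing sharp result rather than redoing it.
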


\begin{remark}
Theorem \ref{th:SRIP} shows that Gaussian random matrixes  satisfy SRIP with high probability. Another popular measurement ensemble in compressed sensing is the Bernoulli ensemble, which is defined as
$$
\PP(a_{j,i}=1/\sqrt{m})\,=\,\PP(a_{j,i}=-1/\sqrt{m})\,=\,1/2.
$$
For any matrix A in this ensemble, we set
$$
I_0:=\{j\in [m] :  a_{j,1}=a_{j,2}\},
$$
i.e.,  the first two entries of $a_j$ are same provided $j\in I_0$.
Then either $\abs{I_0}\geq m/2$ or $\abs{I_0^c}\geq m/2$ holds. Without loss of generality, we assume that $\abs{I_0}\geq m/2$. Then
a simple observation is that the first two columns of the matrix $A_{I_0}:=[a_j:j\in I_0]^\top$ are linearly dependent, which implies that $A$
does not satisfy SRIP of order $k\geq 2$.
\end{remark}

\begin{remark}
Restricted Isometry Properties of some $m \times n$ matrix $A$ can be interpreted as controlling the singular values of various submatrices of $A$. For instance, establishing that A has the RIP of order $k$ and level $\delta$, is equivalent to saying that the singular values of any $m \times k$ submatrix of A lie in a $\delta$-neighborhood of 1. From this perspective, the SRIP with these parameters demands that the same is true of any $m' \times k$ submatrix of A for which $m' \geq m/2$, which in turn means that any $m' \times n$ submatrix of A, with $m' \geq m/2$, satisfies RIP with the aforementioned parameters. This can be interpreted as an erasure robust-property. D. Mixon and A. Bandiera have studied Numerically Erasure Robust Frames \cite{NERF}, which  instead have the property that singular values of any $m' \times n$ submatrix of A, with $m' > m/2$ are in a $\delta$ neighborhood of 1. Our results are complementary. For instance, the RIP property is robust to arbitrarily large erasures, while this is unknown for NERFs.
\end{remark}

\section{Preliminaries}

{\bf Restricted isometry property condition.} As mentioned before, when the matrix $A$ satisfies a certain RIP condition,
one can use $\ell_1$ minimization to recover $k$-sparse signals \cite{candes,tao}.
The result in  \cite{CaiZhang13} improves the sufficient RIP condition for relaxation exactness and shows that
for any $k$-sparse vector $x_0\in \R^n$
\begin{equation*}
   \argmin{x\in \R^n}\{\|x\|_1:~A x = Ax_0\} = \{x_0\}
\end{equation*}
provided $A$ satisfies the RIP of order $t\cdot k$ and $\delta_{t\cdot k}<\sqrt{1-\frac{1}{t}}$ where $t>4/3$.

{\bf The concentration of measure inequalities.}
Suppose that $A\in \R^{m\times n}$ is a Gaussian matrix whose entries $a_{jk}$ are
independent realizations of Gaussian random variables $a_{jk}\sim {\mathcal N}(0,1/m)$. Suppose that $\bar{x}\in \R^n$ is a fixed vector. Then
for all $\epsilon\in (0,1)$ we have
\begin{equation}\label{eq:conc}
{\mathbb P}(\abs{\|A\bar{x}\|_2^2-\|\bar{x}\|_2^2}\geq \epsilon \|{\bar x}\|_2^2)\leq 2\exp(-m(\epsilon^2/4-\epsilon^3/6)).
\end{equation}
The authors of \cite{DeVoreRIP} used (\ref{eq:conc}) to present a simple proof that the random matrices at hand satisfy RIP.
This inequality (\ref{eq:conc}) also follows by the concentration of
measure of Gaussian space:
\begin{theo}(\cite{probability})\label{th:Lip}
Let $f: \mathbb{R}^m \mapsto \R$ be Lipschitz with constant $1$: $|f(x)-f(y)| \leq \|x-y\|_2$. Then for $X_1,\ldots X_m$ iid standard normal random variables  we have
\[
\PP\left[ |f(X_1,\ldots X_m) - \E\left[ f(X_1,\ldots X_m) \right] | \geq t \right] \leq 2e^{-t^2/2}, \label{eq: GC}
\]
where $t\geq 0$.
\end{theo}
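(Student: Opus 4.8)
The plan is to establish the sharp sub-Gaussian bound on the moment generating function of $f(X_1,\ldots,X_m)$ and then convert it into the stated two-sided tail estimate by a Chernoff argument. Write $X=(X_1,\ldots,X_m)$ and $Z=f(X)-\E[f(X)]$. The heart of the matter is the inequality
\[
\E\left[e^{\lambda Z}\right]\,\le\, e^{\lambda^2/2},\qquad \lambda\in\R,
\]
expressing that $Z$ is sub-Gaussian with variance proxy $1$. Granting this, Markov's inequality gives $\PP[Z\ge t]\le e^{-\lambda t}\E[e^{\lambda Z}]\le e^{-\lambda t+\lambda^2/2}$ for every $\lambda>0$; optimizing at $\lambda=t$ yields $\PP[Z\ge t]\le e^{-t^2/2}$, and applying the same bound to $-f$ (also $1$-Lipschitz) gives $\PP[Z\le -t]\le e^{-t^2/2}$. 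A union bound over the two tails then produces the factor $2$ and the claimed estimate.

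Before proving the moment generating function bound I would reduce to smooth $f$. Since $f$ is $1$-Lipschitz it is differentiable almost everywhere with $\norm{\nabla f}\le 1$ (Rademacher), and mollifying $f_\varepsilon:=f*\rho_\varepsilon$ produces smooth $1$-Lipschitz functions with $\norm{\nabla f_\varepsilon}\le 1$ everywhere, $f_\varepsilon\to f$ locally uniformly, and $\E[f_\varepsilon(X)]\to\E[f(X)]$ (all Gaussian expectations are finite because a Lipschitz function grows at most linearly). Proving the tail bound for each $f_\varepsilon$ and letting $\varepsilon\to 0$ recovers the general case, so I may assume $f$ is smooth with $\norm{\nabla f}\le 1$ throughout.

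For smooth $f$ the target bound follows from the Herbst argument driven by the Gaussian logarithmic Sobolev inequality: for the standard Gaussian measure $\gamma$ on $\R^m$ and smooth $g$,
\[
\E\left[g^2\log g^2\right]-\E\left[g^2\right]\log\E\left[g^2\right]\,\le\,2\,\E\left[\norm{\nabla g}^2\right].
\]
Applying this with $g=e^{\lambda f/2}$, so that $g^2=e^{\lambda f}$ and $\norm{\nabla g}^2=\tfrac{\lambda^2}{4}e^{\lambda f}\norm{\nabla f}^2\le \tfrac{\lambda^2}{4}e^{\lambda f}$, and writing $H(\lambda):=\E[e^{\lambda f(X)}]$, the left-hand side equals $\lambda H'(\lambda)-H(\lambda)\log H(\lambda)$, so that
\[
\lambda H'(\lambda)-H(\lambda)\log H(\lambda)\,\le\,\tfrac{\lambda^2}{2}\,H(\lambda).
\]
Dividing by $\lambda^2H(\lambda)$ recognizes the left side as $\tfrac{d}{d\lambda}\big(\lambda^{-1}\log H(\lambda)\big)\le \tfrac12$. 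Since $\lambda^{-1}\log H(\lambda)\to\E[f(X)]$ as $\lambda\to 0^+$, integrating from $0$ to $\lambda$ gives $\log H(\lambda)\le \lambda\,\E[f(X)]+\tfrac{\lambda^2}{2}$, that is $\E[e^{\lambda Z}]\le e^{\lambda^2/2}$ for $\lambda>0$; the case $\lambda<0$ is identical after replacing $f$ by $-f$.

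The main obstacle is supplying the Gaussian log-Sobolev inequality itself, which carries all the dimension-free content of the result. I would obtain it by tensorization: first prove the two-point (Bernoulli) log-Sobolev inequality on $\{-1,+1\}$, which tensorizes to the uniform measure on the discrete cube $\{-1,+1\}^N$, and then pass to the Gaussian via the central limit theorem, realizing each standard normal as a limit of normalized sums of independent signs (alternatively, one may run the Ornstein--Uhlenbeck semigroup $(P_s)$ and integrate $\tfrac{d}{ds}\E[(P_sg)^2\log(P_sg)^2]$ using the commutation $\nabla P_s=e^{-s}P_s\nabla$). I would also note two alternative routes to concentration: the Gaussian isoperimetric inequality of Borell and Sudakov--Tsirelson yields the sharp tail but centered at the median rather than the mean, and a direct interpolation between $f(X)$ and an independent copy $f(Y)$ is elementary but, without extra care, loses a factor in the variance proxy. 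The log-Sobolev route is preferred precisely because it delivers the mean-centered constant $e^{-t^2/2}$ demanded by the statement.
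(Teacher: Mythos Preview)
Your proposal is correct and follows the standard Herbst argument via the Gaussian logarithmic Sobolev inequality, which is precisely the route in Ledoux's monograph that the paper cites. Note, however, that the paper does not supply a proof of this theorem at all: it is stated as a quoted result from \cite{probability} and used as a black box. Your write-up therefore goes well beyond what the paper itself contains; as a self-contained proof of the cited statement it is sound, with the smoothing reduction, the differential inequality for $\lambda^{-1}\log H(\lambda)$, and the Chernoff-plus-union-bound conversion all carried out correctly.
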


{\bf The expectation of the $k$th smallest random variable.}
Follow the definition in \cite{min}, we say that a random variable $\xi$ satisfies the $(\alpha,\beta)$-condition if
$$
{\mathbb P}(\abs{\xi}\leq t)\leq \alpha t\qquad \mbox{ for every } t\geq 0
$$
and
$$
{\mathbb P}(\abs{\xi}>t)\leq \exp(-\beta t) \qquad \mbox{ for every } t\geq 0,
$$
where $\alpha>0, \beta>0$ are parameters. Then the following theorem presents a lower bound
 for the expectation of the $k$th smallest order statistic of $m$ such independent random variables.
\begin{theo}(\cite{min})\label{th:min}
Let $\alpha>0, \beta>0$.  Let $\xi_1,\ldots,\xi_m$
be independent random variables satisfying the $(\alpha,\beta)$-condition. Then
$$
c_{\alpha}\max_{1\leq j\leq k}\frac{k+1-j}{m-j+1}\leq \E\abs{\xi}_{(k)}
$$
where $c_{\alpha}=\frac{1}{2e\alpha}(1-\frac{1}{4\sqrt{\pi}})$ and $\abs{\xi}_{(k)}$ is the $k$th smallest order statistic, i.e.,
$\abs{\xi}_{(1)}\leq \cdots \leq\abs{\xi}_{(m)} $.
\end{theo}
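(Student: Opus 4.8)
The plan is to turn the expectation into a tail integral and reduce the claim to a lower bound on the \emph{lower} tail of a sum of independent indicators. Set $N(t) := \abs{\{i\in[m]:\abs{\xi_i}\le t\}}$; the decisive observation is that the $k$th smallest value exceeds $t$ precisely when at most $k-1$ of the $\abs{\xi_i}$ lie in $[0,t]$, i.e. $\{\abs{\xi}_{(k)}>t\}=\{N(t)\le k-1\}$. Since $\abs{\xi}_{(k)}\ge 0$,
\[
\E\,\abs{\xi}_{(k)}=\int_0^\infty \PP(\abs{\xi}_{(k)}>t)\,dt=\int_0^\infty \PP(N(t)\le k-1)\,dt .
\]
Now $N(t)$ is a sum of $m$ independent Bernoulli indicators with parameters $p_i=\PP(\abs{\xi_i}\le t)\le \alpha t$, which is exactly the first half of the $(\alpha,\beta)$-condition. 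For $t\le 1/\alpha$ this Poisson--binomial is stochastically dominated by a $\mathrm{Bin}(m,\alpha t)$ variable, so for the decreasing event $\{\,\cdot\le k-1\}$ we get $\PP(N(t)\le k-1)\ge \PP(\mathrm{Bin}(m,\alpha t)\le k-1)$. Note the sub-exponential tail (the $\beta$ half) plays no role in the lower bound; it is what one would invoke for a matching \emph{upper} bound on $\E\,\abs{\xi}_{(k)}$.

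I would then integrate the binomial lower tail exactly. Substituting $q=\alpha t$,
\[
\E\,\abs{\xi}_{(k)}\ge \int_0^{1/\alpha}\PP(\mathrm{Bin}(m,\alpha t)\le k-1)\,dt=\frac1\alpha\int_0^1\sum_{i=0}^{k-1}\binom{m}{i}q^i(1-q)^{m-i}\,dq .
\]
Each term is a Beta integral independent of $i$, namely $\int_0^1\binom{m}{i}q^i(1-q)^{m-i}\,dq=\binom{m}{i}\frac{i!\,(m-i)!}{(m+1)!}=\frac{1}{m+1}$, so the sum collapses to $\frac{k}{m+1}$ and
\[
\E\,\abs{\xi}_{(k)}\ge \frac{k}{\alpha(m+1)} .
\]
Finally, $j\mapsto \frac{k+1-j}{m-j+1}$ is strictly decreasing on $\{1,\dots,k\}$ (its derivative in $j$ has the sign of $k-m<0$), so the maximum in the statement is attained at $j=1$ and equals $\frac{k}{m}$. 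Since $\frac{m}{m+1}\ge \tfrac12>\frac{1}{2e}(1-\frac{1}{4\sqrt\pi})=\alpha c_\alpha$, we have $\frac{k}{\alpha(m+1)}\ge c_\alpha\,\frac{k}{m}=c_\alpha\max_{1\le j\le k}\frac{k+1-j}{m-j+1}$, which is the assertion, in fact with the sharper constant $1/\alpha$.

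Should one instead want to reproduce the exact $j$-indexed family and the precise constant $c_\alpha=\frac{1}{2e\alpha}(1-\frac1{4\sqrt\pi})$ rather than my cleaner bound, the same scheme runs threshold-by-threshold: for fixed $j$, restrict the tail integral to an interval $[0,t_j]$ on which the binomial mean $m\alpha t$ stays below $k$, lower bound $\PP(\mathrm{Bin}(m,\alpha t)\le k-1)$ there by a universal constant — the factor $e^{-1}$ emerging from an estimate of the type $(1-q)^m\ge e^{-mq/(1-q)}$ and the factor $1-\frac{1}{4\sqrt\pi}$ from a Gaussian/median correction showing the lower tail at a sub-mean point exceeds $\tfrac12$ up to an error controlled by $\int e^{-x^2}$ — and integrate to produce $c_\alpha\frac{k+1-j}{m-j+1}$ before taking the maximum over $j$. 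Either way the single nontrivial point, and the main obstacle, is controlling the lower tail of the non-identically-distributed count $N(t)$: the tail-integral-plus-Beta-identity route disposes of it by an exact computation and is why I would adopt it as the primary argument, reserving the threshold version only for matching the stated constant.
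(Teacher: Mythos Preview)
The paper does not supply its own proof of this theorem: it is quoted from Gordon--Litvak--Sch\"utt--Werner \cite{min} and used as a black box in the proof of Lemma~\ref{le:mun}. So there is no in-paper argument to compare against.

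That said, your argument is correct and self-contained. The tail-integral identity $\E\abs{\xi}_{(k)}=\int_0^\infty\PP(N(t)\le k-1)\,dt$, the stochastic-domination step (each Bernoulli$(p_i)$ with $p_i\le\alpha t$ is dominated by a Bernoulli$(\alpha t)$, hence $N(t)\preceq\mathrm{Bin}(m,\alpha t)$ and the lower-tail probability can only increase), and the Beta-integral collapse $\int_0^1\binom{m}{i}q^i(1-q)^{m-i}\,dq=\frac{1}{m+1}$ are all sound. The resulting bound $\E\abs{\xi}_{(k)}\ge \frac{k}{\alpha(m+1)}$ is in fact sharper than the statement: since $\frac{m}{m+1}\ge\frac12>\frac{1}{2e}\bigl(1-\frac{1}{4\sqrt\pi}\bigr)$, your inequality implies the theorem with room to spare, and the second half of the $(\alpha,\beta)$-condition is indeed irrelevant for the lower bound. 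One cosmetic point: when $k=m$ the map $j\mapsto\frac{k+1-j}{m+1-j}$ is constant rather than strictly decreasing, but the maximum is still $k/m=1$ and the conclusion is unaffected.
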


{\bf Johnson-Lindenstrauss Lemma.} The J-L lemma, which has proven to be a useful tool in dimensionality reduction, follows easily from (\ref{eq:conc}) :
\begin{lem}[Johnson-Lindenstrauss] Let $\epsilon\in (0,1)$ be given. For every set $Q$ of $\abs{Q}$ points in
$\R^n$, if $m$ is a positive integer such that $m>O(\ln(\abs{Q})/\epsilon^2)$, then there exists a
Lipschitz mapping $f:\R^n\rightarrow \R^m$ such that
$$
(1-\epsilon)\|u-v\|_2^2\leq \|f(u)-f(v)\|_2^2 \leq (1+\epsilon)\|u-v\|_2^2
$$
for all $u,v\in Q$.

\end{lem}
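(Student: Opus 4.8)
The plan is to use the probabilistic method with a random Gaussian matrix as the candidate map, exactly as signaled by the remark that the lemma ``follows easily from (\ref{eq:conc})''. First I would let $A \in \R^{m \times n}$ be a random Gaussian matrix whose entries are independent draws from ${\mathcal N}(0, 1/m)$, and take the candidate mapping to be the linear map $f(x) := Ax$. Since $f$ is linear it is automatically Lipschitz, with constant equal to the operator norm of $A$, so the regularity requirement comes for free; the entire content of the statement is the two-sided distortion bound.

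The core step is to fix an arbitrary pair of distinct points $u, v \in Q$ and apply the concentration inequality (\ref{eq:conc}) to the single fixed vector $\bar{x} := u - v$. Because $f(u) - f(v) = A(u-v) = A\bar{x}$, the inequality says that $\|f(u)-f(v)\|_2^2 = \|A\bar{x}\|_2^2$ lies within the multiplicative window $(1\pm\epsilon)$ of $\|\bar{x}\|_2^2 = \|u-v\|_2^2$, except with probability at most $2\exp(-m(\epsilon^2/4 - \epsilon^3/6))$. Thus each individual pair violates the desired two-sided bound with probability exponentially small in $m$.

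Next I would take a union bound over all $\binom{\abs{Q}}{2} < \abs{Q}^2/2$ pairs of points in $Q$. The probability that \emph{some} pair violates its distortion bound is then at most $\abs{Q}^2 \exp(-m(\epsilon^2/4 - \epsilon^3/6))$. It remains to choose $m$ large enough to drive this below $1$: taking $m > O(\ln(\abs{Q})/\epsilon^2)$, concretely $m$ exceeding a suitable constant multiple of $\ln(\abs{Q})/(\epsilon^2/4 - \epsilon^3/6)$, makes the failure probability strictly less than $1$. By the probabilistic method, a random $A$ therefore has strictly positive probability of simultaneously satisfying all $\binom{\abs{Q}}{2}$ inequalities, so at least one realization of $A$ exists; fixing that realization produces the desired Lipschitz map $f$.

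The only step requiring genuine care is the bookkeeping in the last paragraph: one must verify that the exponent $\epsilon^2/4 - \epsilon^3/6$ is bounded below by a fixed positive constant multiple of $\epsilon^2$ uniformly for $\epsilon \in (0,1)$. This holds because $\epsilon^2/4 - \epsilon^3/6 = \epsilon^2(1/4 - \epsilon/6) \geq \epsilon^2/12$ on that range, so the requirement on $m$ collapses to the stated $O(\ln(\abs{Q})/\epsilon^2)$ rather than something with a more delicate dependence on $\epsilon$. Everything else is a direct substitution into (\ref{eq:conc}) followed by a union bound, so I do not anticipate any substantive obstacle.
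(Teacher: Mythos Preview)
Your proposal is correct and matches the paper's intended argument. The paper does not actually write out a proof of this lemma; it merely states that the result ``follows easily from (\ref{eq:conc})'', and the argument you sketch---apply the concentration inequality to each difference vector $u-v$, union-bound over the at most $\abs{Q}^2$ pairs, and choose $m$ of order $\ln(\abs{Q})/\epsilon^2$---is precisely the standard derivation the paper is alluding to.
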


\section{ The strong concentration of measure inequality and Johnson-Lindenstrauss Lemma }

In this section, we extend the concentration inequality  (\ref{eq:conc}) to a stronger version which plays an important
role in our proof of the main results.

 We first introduce some notation. For some $x \in \mathbb{R}^m$, set $\abs{x}:=(\abs{x_1},\ldots,\abs{x_m})$ and  $x^<:=(x_{(1)},\ldots,x_{(m)})\in \R^m$ where $x_{(1)}\leq \cdots \leq x_{(m)}$. Recall that $[m]:=\{1,\ldots,m\}$. Now define
$$
F(x):=\sqrt{\sum_{j=1}^{\lceil m/2\rceil}\abs{x}_{(j)}^2},
$$
where $\abs{x}_{(1)}\leq \cdots \leq \abs{x}_{(m)}$.
 Then we have
\begin{lem}\label{le:Lip}
The function $F: \mathbb{R}^m \mapsto \R$ is Lipschitz with constant $1$, i.e.,
$$
\abs{F(x)-F(y)}\leq \|x-y\|_2
$$
for any $x, y\in \R^m$.
\end{lem}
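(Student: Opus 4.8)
The plan is to recognize $F$ as a pointwise minimum of functions that are manifestly $1$-Lipschitz, which sidesteps having to deal directly with the (nonlinear, non-smooth) rearrangement map $x\mapsto \abs{x}^<$. For a subset $I\subseteq[m]$, write $x_I\in\R^m$ for the vector obtained from $x$ by zeroing out every coordinate outside $I$. Since $\sum_{j=1}^{\lceil m/2\rceil}\abs{x}_{(j)}^2$ is by definition the sum of the $\lceil m/2\rceil$ smallest among the numbers $x_1^2,\dots,x_m^2$, we have the identity
\[
F(x)^2=\min_{I\subseteq[m],\ \abs{I}=\lceil m/2\rceil}\ \sum_{i\in I}x_i^2=\min_{I\subseteq[m],\ \abs{I}=\lceil m/2\rceil}\norm{x_I}^2,
\qquad\text{so}\qquad F(x)=\min_{\abs{I}=\lceil m/2\rceil}\norm{x_I}.
\]

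First I would note that for each fixed $I$ the map $x\mapsto\norm{x_I}$ is $1$-Lipschitz: it is the composition of the coordinate projection $x\mapsto x_I$, which does not increase the Euclidean norm, with $\norm{\cdot}$, which is $1$-Lipschitz by the triangle inequality. Then I would invoke the elementary fact that a pointwise minimum of a finite family of $1$-Lipschitz functions is again $1$-Lipschitz: if $F=\min_\alpha f_\alpha$ with each $f_\alpha$ being $1$-Lipschitz, then for any $x,y$ and any $\alpha$ one has $F(x)\le f_\alpha(x)\le f_\alpha(y)+\norm{x-y}$; taking the minimum over $\alpha$ on the right gives $F(x)\le F(y)+\norm{x-y}$, and exchanging the roles of $x$ and $y$ yields $\abs{F(x)-F(y)}\le\norm{x-y}$. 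Applying this with the finitely many functions $f_I(x)=\norm{x_I}$, $\abs{I}=\lceil m/2\rceil$, gives the lemma. Equivalently, one can argue directly without naming the principle: given $x,y$, assume without loss of generality that $F(x)\ge F(y)$, choose $I$ with $\abs{I}=\lceil m/2\rceil$ realizing $F(y)=\norm{y_I}$, and use $F(x)\le\norm{x_I}$ to obtain $F(x)-F(y)\le\norm{x_I}-\norm{y_I}\le\norm{x_I-y_I}\le\norm{x-y}$.

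There is essentially no serious obstacle here; the only point requiring any care is that the index set attaining the minimum depends on the input $x$, which is exactly why one wants the minimum-over-index-sets reformulation rather than attempting to bound a derivative of the rearrangement map (which is continuous and piecewise linear but not everywhere differentiable). Verifying the displayed identity for $F(x)^2$ is immediate from the definition of the order statistics $\abs{x}_{(1)}\le\cdots\le\abs{x}_{(m)}$.
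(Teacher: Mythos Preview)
Your proof is correct and clean, but it takes a genuinely different route from the paper. The paper proceeds by working with the rearrangement map itself: it uses the rearrangement inequality to show $\|\,|x|^{<}-|y|^{<}\,\|_2\le\|\,|x|-|y|\,\|_2$, then applies the reverse triangle inequality to the projection onto the first $\lceil m/2\rceil$ coordinates, and finally invokes $\|\,|x|-|y|\,\|_2\le\|x-y\|_2$. Your approach instead rewrites $F$ as $\min_{|I|=\lceil m/2\rceil}\|x_I\|_2$ and uses closure of $1$-Lipschitz functions under finite minima. Your argument is more elementary in that it avoids the rearrangement inequality entirely and handles the awkward dependence of the minimizing index set on $x$ in a single stroke; on the other hand, the paper's computation yields the stronger intermediate fact that the full sorted-absolute-value map $x\mapsto|x|^{<}$ is itself a contraction in $\ell_2$, which is of independent interest even though only the Lipschitz bound on $F$ is needed here.
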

\begin{proof}
Consider
\begin{eqnarray*}
\|\abs{x}^<-\abs{y}^<\|_2^2&=&\|\abs{x}^<\|_2^2+\|\abs{y}^<\|_2^2-2\sum_{j=1}^m(\abs{x}^<)_j(\abs{y}^<)_j\\
&=&\|x\|_2^2+\|y\|_2^2-2\sum_{j=1}^m(\abs{x}^<)_j(\abs{y}^<)_j\\
&\leq &\|x\|_2^2+\|y\|_2^2-2\sum_{j=1}^m\abs{x}_j\abs{y}_j\\
&=&\|\abs{x}-\abs{y}\|_2^2.
\end{eqnarray*}
Here, in the third inequality, we use the rearrangement inequality. Then we have
\begin{eqnarray*}
\abs{F(x)-F(y)} &\leq& \|\abs{x}^{<}-\abs{y}^<\|_2\\
&\leq &\|\abs{x}-\abs{y}\|_2\\
&\leq &\|x-y\|_2.
\end{eqnarray*}
\end{proof}

\begin{lem}\label{le:mun}
  We assume that $X_1,\ldots,X_m$ are i.i.d. ${\mathcal N}(0,1)$ and set
$$
\mu_m:=\frac{1}{\sqrt{m}}\E\left[\sqrt{\sum_{j=1}^{\lceil m/2\rceil }\abs{X}_{(j)}^2}\right],
$$
where $\abs{X}_{(1)}\leq \abs{X}_{(2)}\leq \cdots \leq \abs{X}_{(m)}$. Then for any $m\geq 1$, we have
$$
\mu_m\geq \nu_0
$$
where $\nu_0:= \frac{1}{32e } \cdot  \sqrt{\frac{\pi}{2}}\cdot \left(1-\frac{1}{4\sqrt{\pi}} \right)\thickapprox 0.0124 $.
\end{lem}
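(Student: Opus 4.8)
The plan is to trade the $\ell_2$-shaped quantity defining $\mu_m$ for a plain sum of order statistics, to which Theorem \ref{th:min} applies term by term, and then to check that after dividing by $\sqrt m$ the resulting bound no longer depends on $m$.

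Write $k:=\lceil m/2\rceil$ and $Z_j:=|X|_{(j)}$, so that $F(X)=\big(\sum_{j=1}^{k}Z_j^2\big)^{1/2}$ and $\mu_m=\frac{1}{\sqrt m}\,\E F(X)$. The first step is the elementary inequality obtained by applying Cauchy--Schwarz to $(Z_1,\dots,Z_k)$ and $(1,\dots,1)\in\R^k$:
\[
\sum_{j=1}^{k}Z_j\ \le\ \sqrt{k}\,\Big(\sum_{j=1}^{k}Z_j^2\Big)^{1/2}\ =\ \sqrt{k}\,F(X),
\]
so that $F(X)\ge k^{-1/2}\sum_{j=1}^{k}Z_j$ pointwise, and hence, by linearity of expectation,
\[
\E F(X)\ \ge\ \frac{1}{\sqrt k}\sum_{j=1}^{k}\E\,|X|_{(j)}.
\]

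Next I would apply Theorem \ref{th:min} to each $\E\,|X|_{(j)}$, viewed as the $j$th smallest order statistic of $|X_1|,\dots,|X_m|$. A standard normal $X_i$ satisfies the $(\alpha,\beta)$-condition with $\alpha=\sqrt{2/\pi}$, since $\PP(|X_i|\le t)=\int_{-t}^{t}\varphi(x)\,dx\le 2t\,\varphi(0)=\sqrt{2/\pi}\,t$ with $\varphi$ the standard Gaussian density; the tail requirement $\PP(|X_i|>t)\le e^{-\beta t}$ holds for some absolute $\beta>0$ (for instance by combining $\PP(|X_i|>t)\le e^{-t^2/2}$ with the trivial bound near the origin), and since $\beta$ does not enter the constant $c_\alpha$ its precise value is immaterial. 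Taking $i=1$ in the displayed maximum in Theorem \ref{th:min} then gives
\[
\E\,|X|_{(j)}\ \ge\ c_\alpha\max_{1\le i\le j}\frac{j+1-i}{m-i+1}\ \ge\ \frac{c_\alpha\,j}{m},
\qquad
c_\alpha=\frac{1}{2e\alpha}\Big(1-\frac{1}{4\sqrt\pi}\Big)=\frac{1}{2e}\sqrt{\frac{\pi}{2}}\Big(1-\frac{1}{4\sqrt\pi}\Big).
\]

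Combining the two displays and summing the arithmetic progression,
\[
\E F(X)\ \ge\ \frac{c_\alpha}{m\sqrt k}\sum_{j=1}^{k}j\ =\ \frac{c_\alpha}{m\sqrt k}\cdot\frac{k(k+1)}{2}\ \ge\ \frac{c_\alpha\,k^{3/2}}{2m},
\]
and since $k=\lceil m/2\rceil\ge m/2$ we have $k^{3/2}\ge m^{3/2}/2^{3/2}$, whence
\[
\mu_m\ =\ \frac{1}{\sqrt m}\,\E F(X)\ \ge\ \frac{c_\alpha}{2^{5/2}}\ =\ \frac{1}{2^{5/2}}\cdot\frac{1}{2e}\sqrt{\frac{\pi}{2}}\Big(1-\frac{1}{4\sqrt\pi}\Big)\ \ge\ \frac{1}{32e}\sqrt{\frac{\pi}{2}}\Big(1-\frac{1}{4\sqrt\pi}\Big)=\nu_0,
\]
the last inequality being just $2^{5/2}=4\sqrt2\le 16$. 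This actually leaves a bit of room, which could be spent on any extra losses incurred by being less careful with the ceilings. I do not expect a genuine obstacle: the only points that need attention are verifying the $(\alpha,\beta)$-condition for the standard normal and making sure the chain of inequalities is valid for every $m\ge 1$ — the estimate $\E|X|_{(j)}\ge c_\alpha j/m$ is a legitimate lower bound for all $m$ (it is merely one entry of the maximum in Theorem \ref{th:min}), so the small cases cause no trouble — and everything else is the routine manipulation above.
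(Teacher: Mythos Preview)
Your argument is correct and in fact a bit cleaner than the paper's. The paper splits into two cases: for $m\le 8$ it uses only the crude bound $\sqrt{\sum_{j\le\lceil m/2\rceil}|X|_{(j)}^2}\ge \sqrt{\lceil m/2\rceil}\,|X|_{(1)}$ together with $\E|X|_{(1)}\ge c_\alpha/m$; for $m>8$ it throws away the bottom quartile, bounds the remaining terms below by the single order statistic $|X|_{(\lfloor m/4\rfloor)}$, and then applies Theorem~\ref{th:min} once to that term. Your route---Cauchy--Schwarz to pass from $\ell_2$ to $\ell_1$, then summing the per-term bounds $\E|X|_{(j)}\ge c_\alpha j/m$---avoids the case split entirely and actually yields $\mu_m\ge c_\alpha/2^{5/2}$, which is stronger than the paper's $\nu_0=c_\alpha/16$ before you deliberately weaken it. The trade-off is minor: the paper's approach needs only a single application of Theorem~\ref{th:min} in each case, while yours invokes it for every $j$, but since the theorem is quoted as a black box this costs nothing. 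Your verification of the $(\alpha,\beta)$-condition for the standard normal is also fine; as you note, $\beta$ is irrelevant to the constant.
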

\begin{proof}

We first consider the case where $m\leq 8$. Then Theorem \ref{th:min} implies that
\begin{eqnarray}
\mu_m &\geq& \sqrt{\frac{\lceil m/2\rceil}{m}}\E \abs{X}_{(1)}\geq \sqrt{\frac{\pi}{2}}\frac{1}{2e}\left(1-\frac{1}{4\sqrt{\pi}}\right)\frac{1}{m}\nonumber\\
 &\geq   & \frac{1}{64} \sqrt{\frac{\pi}{2}} \frac{1}{e}\left(1-\frac{1}{4\sqrt{\pi}}\right).\label{eq:v0}
 \end{eqnarray}
We next only consider the case where $m> 8$.
By applying Theorem \ref{th:min} to standard Gaussian rvs, for which $\alpha = \sqrt{2/\pi}$, we obtain that
$$
\E\abs{X}_{(k)} \geq \frac{1}{2e } \cdot  \sqrt{\frac{\pi}{2}}\cdot \left(1-\frac{1}{4\sqrt{\pi}} \right)\cdot \left(\max_{1\leq j\leq k} \frac{k+1-j}{m-j+1}\right).
$$
 We take $k=\lfloor m/4 \rfloor$ and obtain that
\begin{eqnarray*}
\E\abs{X}_{(\lfloor m/4 \rfloor)} &\geq & \frac{1}{2e } \cdot  \sqrt{\frac{\pi}{2}}\cdot \left(1-\frac{1}{4\sqrt{\pi}} \right) \left(\max_{1\leq j\leq \lfloor m/4 \rfloor} \frac{\lfloor m/4 \rfloor+1-j}{m-j+1}\right) \\
&\geq &  \frac{1}{2e } \cdot  \sqrt{\frac{\pi}{2}}\cdot \left(1-\frac{1}{4\sqrt{\pi}} \right) \left(\frac{  \lfloor m/4 \rfloor }{m}\right) \\
&\geq & \frac{1}{2e } \cdot  \sqrt{\frac{\pi}{2}}\cdot \left(1-\frac{1}{4\sqrt{\pi}} \right) \left(\frac{1}{4}-\frac{1}{m}\right)\\
&\geq & \frac{1}{16e } \cdot  \sqrt{\frac{\pi}{2}}\cdot \left(1-\frac{1}{4\sqrt{\pi}} \right).
\end{eqnarray*}
Then
\begin{eqnarray}
\mu_m=\frac{1}{\sqrt{m}}\E\left[\sqrt{\sum_{j=1}^{\lceil m/2\rceil }\abs{X}_{(j)}^2}\right]&\geq& \frac{1}{\sqrt{m}}\E\left[\sqrt{\sum_{j=\lfloor m/4\rfloor}^{\lceil m/2\rceil }\abs{X}_{(j)}^2}\right]\nonumber\\
&\geq& \frac{1}{\sqrt{m}}\sqrt{\lceil m/2\rceil-\lfloor m/4\rfloor} \E\abs{X}_{(\lfloor m/4 \rfloor)}\nonumber\\
&\geq & \frac{1}{2}  \E\abs{X}_{(\lfloor m/4 \rfloor)}\nonumber\\
&\geq & \frac{1}{32e } \cdot  \sqrt{\frac{\pi}{2}}\cdot \left(1-\frac{1}{4\sqrt{\pi}} \right).\label{eq:v2}
\end{eqnarray}
Combining (\ref{eq:v0}) and (\ref{eq:v2}), we obtain that
$$
\mu_m\geq \min\left\{\frac{1}{32e } \cdot  \sqrt{\frac{\pi}{2}}\cdot \left(1-\frac{1}{4\sqrt{\pi}} \right), \frac{1}{64} \cdot \sqrt{\frac{\pi}{2}} \cdot \frac{1}{e}\cdot \left(1-\frac{1}{4\sqrt{\pi}}\right) \right\}=\nu_0
$$
for any $m\geq 1$.
\end{proof}

Combining the results above, we arrive at
\begin{lem}\label{le:24}
Under the conditions of  Lemma \ref{le:mun} we have
\[
\PP\left[ (\mu_m-t)^2\leq \frac{1}{m}\sum_{i=1}^{\lceil m/2\rceil}(|X|_{(i)})^2   \leq (\mu_m + t)^2 \right] \geq 1- 2\exp{(-mt^2/2)}
\]
where $0\leq t\leq \mu_m$.
\end{lem}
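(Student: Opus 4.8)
The plan is to recognize that the random quantity inside the probability is nothing but $F(X_1,\dots,X_m)^2/m$, where $F$ is the $1$-Lipschitz function of Lemma~\ref{le:Lip}, and then to reduce the two-sided bound on $F^2/m$ to a single deviation bound on $F$ about its mean, which is exactly what Theorem~\ref{th:Lip} (Gaussian concentration of Lipschitz functions) delivers.

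First I would note that $\frac1m\sum_{i=1}^{\lceil m/2\rceil}(|X|_{(i)})^2 = F(X_1,\dots,X_m)^2/m$ and that, by the definition of $\mu_m$ in Lemma~\ref{le:mun}, $\E\bigl[F(X_1,\dots,X_m)\bigr] = \sqrt{m}\,\mu_m$. Since $0\le t\le\mu_m$, both $\mu_m-t$ and $\mu_m+t$ are nonnegative; as $F(X_1,\dots,X_m)/\sqrt{m}\ge0$ as well, squaring is order-preserving on the relevant range, so the event
$$
\left\{(\mu_m-t)^2\le \tfrac1m\textstyle\sum_{i=1}^{\lceil m/2\rceil}(|X|_{(i)})^2\le(\mu_m+t)^2\right\}
$$
coincides with $\left\{\,\bigl|F(X_1,\dots,X_m)/\sqrt{m}-\mu_m\bigr|\le t\,\right\}$, i.e. with $\left\{\,\bigl|F(X_1,\dots,X_m)-\E F(X_1,\dots,X_m)\bigr|\le t\sqrt{m}\,\right\}$.

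Next I would invoke Lemma~\ref{le:Lip}, which guarantees that $F$ is Lipschitz with constant $1$, so Theorem~\ref{th:Lip} applies with $f=F$: for every $s\ge0$,
$$
\PP\bigl[\,|F(X_1,\dots,X_m)-\E F(X_1,\dots,X_m)|\ge s\,\bigr]\le 2e^{-s^2/2}.
$$
Taking $s=t\sqrt{m}$ gives $\PP\bigl[\,|F-\E F|\ge t\sqrt{m}\,\bigr]\le 2e^{-mt^2/2}$, and passing to the complementary event yields precisely the claimed lower bound $1-2\exp(-mt^2/2)$.

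There is essentially no hard step here: the only point that requires a little care is the passage between the squared and the unsquared description of the event, and that is exactly where the hypothesis $t\le\mu_m$ enters (it ensures $\mu_m-t\ge0$, so the lower inequality is not vacuous and the two events genuinely coincide). Everything else is a direct application of Lemmas~\ref{le:Lip} and~\ref{le:mun} together with Theorem~\ref{th:Lip}.
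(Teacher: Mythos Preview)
Your proof is correct and follows essentially the same route as the paper: identify the quantity with $F(X)^2/m$, apply Gaussian concentration (Theorem~\ref{th:Lip}) to the $1$-Lipschitz function $F$ of Lemma~\ref{le:Lip} with deviation parameter $t\sqrt{m}$, and then pass between the unsquared and squared descriptions of the event using $0\le t\le\mu_m$. The paper presents these steps in the same order with slightly less commentary on the role of the hypothesis $t\le\mu_m$.
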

\begin{proof}
Combining Lemma \ref{le:Lip} and Theorem \ref{th:Lip}, we obtain that
\[
\PP\left[ \left| \sqrt{\sum_{i=1}^{\lceil m/2\rceil}(|X|_{(i)})^2} - \sqrt{m}\mu_m  \right| \geq \tau \right] \leq 2e^{-\tau^2/2},
\]
where $\tau\geq 0$. Taking $\tau:=\sqrt{m}t$, we arrive at
\[
\PP\left[ -t\leq \sqrt{\frac{1}{m}\sum_{i=1}^{\lceil m/2\rceil }(|X|_{(i)})^2} - \mu_m   \leq t \right] \geq 1- 2e^{-mt^2/2},
\]
which implies the conclusion.
\end{proof}

We next state the  strong concentration of measure inequalities:
\begin{lem}\label{le:SJL}
Suppose that $A\in \R^{m\times n}$ is a Gaussian matrix whose entries $a_{jk}$ are
independent realizations of Gaussian random variables $a_{jk}\sim {\mathcal N}(0,1/m)$. Suppose that $\bar{x}\in \R^n$ is a fixed vector.
Then there exist absolute constants $0<c_-\leq  c_+<2$ so that
\[
c_-\|\bar{x}\|_2^2 \leq \min_{I\subseteq [m], \abs{I} \geq m/2}\|A_I\bar{x}\|_2^2 \leq \max_{I\subseteq [m], \abs{I} \geq m/2} \|A_I \bar{x}\|_2^2 \leq c_+\|\bar{x}\|_2^2
\]
holds with probability $1- 2e^{-cm}$ where $c>0$
 is an absolute constant.
\end{lem}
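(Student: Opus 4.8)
The plan is to reduce the two extremal quantities to one-dimensional order statistics of a standard Gaussian vector and then assemble the lemmas already proved in this section. By homogeneity we may assume $\|\bar x\|_2 = 1$. Write $Y_j := \innerp{a_j, \bar x}$ for $j \in [m]$; since the rows $a_j$ are independent $\mathcal{N}(0, I_n/m)$ vectors, the $Y_j$ are i.i.d.\ $\mathcal{N}(0,1/m)$, and $X_j := \sqrt{m}\,Y_j$ are i.i.d.\ standard normal. For any $I \subseteq [m]$ we have $\|A_I \bar x\|_2^2 = \sum_{j \in I} Y_j^2$, a sum of nonnegative terms, so $\|A_I\bar x\|_2^2$ is nondecreasing under enlarging $I$; hence the maximum over $\{I : |I| \ge m/2\}$ is attained at $I = [m]$ and equals $\|A\bar x\|_2^2$. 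Applying the concentration inequality (\ref{eq:conc}) with a fixed $\epsilon \in (0,1)$ (say $\epsilon = 1/2$) gives $\|A\bar x\|_2^2 \le (1+\epsilon)\|\bar x\|_2^2$ with probability at least $1 - 2\exp(-m(\epsilon^2/4 - \epsilon^3/6))$, so we may take $c_+ := 1+\epsilon < 2$.

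For the lower bound, the same nonnegativity means that minimizing $\sum_{j\in I} Y_j^2$ over $|I| \ge \lceil m/2\rceil$ amounts to keeping exactly the $\lceil m/2\rceil$ smallest values, so
\[
\min_{I\subseteq[m],\,|I|\ge m/2}\|A_I\bar x\|_2^2 \;=\; \sum_{j=1}^{\lceil m/2\rceil}|Y|_{(j)}^2 \;=\; \frac{1}{m}\sum_{j=1}^{\lceil m/2\rceil}|X|_{(j)}^2 .
\]
Now I would apply Lemma \ref{le:24} with the parameter choice $t := \nu_0/2$. Since Lemma \ref{le:mun} gives $\mu_m \ge \nu_0$ for every $m \ge 1$, we have $t \le \mu_m$ (so Lemma \ref{le:24} applies) and $(\mu_m - t)^2 \ge (\nu_0/2)^2$, whence with probability at least $1 - 2\exp(-m\nu_0^2/8)$,
\[
\min_{I\subseteq[m],\,|I|\ge m/2}\|A_I\bar x\|_2^2 \;=\; \frac{1}{m}\sum_{j=1}^{\lceil m/2\rceil}|X|_{(j)}^2 \;\ge\; \frac{\nu_0^2}{4}\,\|\bar x\|_2^2 ,
\]
so we may take $c_- := \nu_0^2/4$. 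A union bound over the two failure events then yields the claim with $c$ taken to be a constant fraction of $\min\{\nu_0^2/8,\ \epsilon^2/4 - \epsilon^3/6\}$, the prefactor $4 = 2+2$ being absorbed into the exponent at the cost of a slightly smaller absolute constant (for $m$ past an absolute threshold).

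I do not expect a genuine obstacle here: the real content — the Lipschitz bound on $F$ (Lemma \ref{le:Lip}), the uniform lower bound $\mu_m \ge \nu_0$ (Lemma \ref{le:mun}), and the resulting concentration (Lemma \ref{le:24}) — is already established. The only points needing a little care are the two elementary combinatorial observations identifying the index sets that attain the maximum ($I = [m]$) and the minimum (the $\lceil m/2\rceil$ indices of smallest $|Y_j|$), and choosing the parameter $t$ — and hence $c_-, c_+, c$ — independently of $m$ and $n$ so that the resulting constants are truly absolute. Note also that $c_-$ is forced to be bounded away from $0$ only by an absolute constant and may be quite small, consistent with the remark in Section 2 that $\theta_-$ cannot be pushed arbitrarily close to a prescribed value.
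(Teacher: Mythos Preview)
Your argument is correct and follows essentially the same route as the paper: normalize $\|\bar x\|_2=1$, identify the maximum with $\|A\bar x\|_2^2$ and the minimum with $\frac{1}{m}\sum_{j\le \lceil m/2\rceil}|X|_{(j)}^2$, then invoke the classical concentration (\ref{eq:conc}) for the upper bound and Lemma~\ref{le:24} with $t=\nu_0/2$ (using Lemma~\ref{le:mun}) for the lower bound to obtain $c_-=\nu_0^2/4$. If anything, you are slightly more careful than the paper in spelling out the extremal index sets and the final union bound.
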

\begin{proof}
Without loss of generality, we assume that $\|\bar{x}\|_2=1$.
Set $y:=A\bar{x}$ and $y_I:=A_I\bar{x}$ where $I\subseteq [m]$. Then the entries of $y$ are independent realizations of Gaussian random variables $y_j\sim {\mathcal N}(0,1/m)$. Based on  Lemma \ref{le:mun}, $\mu_m \geq \nu_0$  for any $m\geq 1$.
Taking $t=\nu_0/2$ in Lemma \ref{le:24}, we have
\[
\PP\left[  \min_{I\subseteq [m], \abs{I} \geq m/2}\|y_I\|_2^2   \geq c_- \right] \geq 1- 2e^{-\nu_0^2m/8},
\]
where $c_-:=\nu_0^2/4$.
Note that
$$
\max_{I\subseteq [m], \abs{I} \geq m/2} \|A_I \bar{x}\|_2^2  \leq \|A\bar{x}\|_2^2.
$$
And hence, the upper bound follows from the proof of the classical concentration of measure inequalities (\ref{eq:conc}).
\end{proof}

Combining Lemma \ref{le:SJL} and a standard probability argument of J-L lemma, we can obtain
the erasure-robust version of the J-L Lemma, which can be interpreted as a J-L map that has robustness to corrupted measurements.
\begin{lem}\label{le:SJL2}
 For every set $Q$ of $\abs{Q}$ points in
$\R^n$, there exists a Lipschitz map $f: \mathbb{R}^n \mapsto \mathbb{R}^m$ with $m = O(\log \abs{Q})$ and absolute constants $0<c_-<c_+<2$ such that
\[
c_- \|u-v\|_2^2 \leq \|f_I(u)-f_I(v)\|_2^2 \leq c_+ \|u-v\|_2^2
\]
holds   for all  $u,v\in  Q$ and all $I\subset [m]$ with $\abs{I}\geq m/2$. Here $f_I(u)$ denotes the sub-vector of $f(u)$ only the
entries with the indices in $I$ are kept.
\end{lem}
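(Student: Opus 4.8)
The plan is to take $f$ to be a random Gaussian linear map and to derandomize via a union bound over pairs, exactly as in the classical proof of the Johnson--Lindenstrauss lemma, but invoking the strong concentration estimate of Lemma \ref{le:SJL} in place of \eqref{eq:conc}. Concretely, let $A\in\R^{m\times n}$ have i.i.d.\ entries $a_{jk}\sim\NN(0,1/m)$ and set $f(u):=Au$; note that $f$ is linear, hence Lipschitz (with constant $\|A\|_{2\to 2}$), and that $f_I(u)-f_I(v)=A_I(u-v)$ for every $I\subseteq[m]$, so it suffices to control $\|A_I(u-v)\|_2^2$ uniformly over $I$ with $|I|\ge m/2$ and over all pairs $u,v\in Q$.

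First I would fix an (unordered) pair $\{u,v\}\subseteq Q$ with $u\ne v$. The vector $\bar x:=u-v$ is a fixed vector in $\R^n$, so Lemma \ref{le:SJL} applies and yields absolute constants $0<c_-\le c_+<2$ and an absolute constant $c>0$ such that
\[
c_-\|u-v\|_2^2\le \min_{I\subseteq[m],\,|I|\ge m/2}\|A_I(u-v)\|_2^2\le \max_{I\subseteq[m],\,|I|\ge m/2}\|A_I(u-v)\|_2^2\le c_+\|u-v\|_2^2
\]
fails with probability at most $2e^{-cm}$. (If $u=v$ the inequality is trivial.) The constants $c_-,c_+,c$ do not depend on the pair.

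Next I would union bound over the at most $\binom{|Q|}{2}<|Q|^2$ pairs: the probability that the displayed inequality fails for \emph{some} pair is at most $2|Q|^2 e^{-cm}$. Choosing $m$ to be a sufficiently large constant multiple of $\log|Q|$ --- say $m\ge (4/c)\log|Q|$, which makes $2|Q|^2e^{-cm}\le 2|Q|^{-2}<1$ once $|Q|\ge 2$ --- this probability is strictly less than $1$. Hence there exists at least one realization of $A$ for which the inequality holds simultaneously for all pairs $u,v\in Q$ and all $I\subseteq[m]$ with $|I|\ge m/2$; fixing such an $A$ and setting $f:=A$ gives the desired map with $m=O(\log|Q|)$.

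The argument is entirely routine once Lemma \ref{le:SJL} is in hand; the only point requiring care is bookkeeping of the constants --- one must observe that $c_-$, $c_+$ and the exponential rate $c$ produced by Lemma \ref{le:SJL} are absolute (in particular independent of the pair $\{u,v\}$ and of $m$), so that the union bound over $|Q|^2$ events can be absorbed by taking $m$ proportional to $\log|Q|$ with a constant depending only on $c$. There is no real obstacle here; the ``hard part'' was already done in establishing the strong concentration inequality.
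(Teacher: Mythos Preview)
Your proposal is correct and is precisely the argument the paper has in mind: the paper does not give an explicit proof of Lemma \ref{le:SJL2} but simply says it follows by ``combining Lemma \ref{le:SJL} and a standard probability argument of J-L lemma,'' which is exactly the union-bound-over-pairs derandomization you wrote out. The bookkeeping you flag (that $c_-$, $c_+$, $c$ are absolute and independent of the pair) is the only subtlety, and you have handled it correctly.
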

\begin{remark}
Based on the argument of Lemma \ref{le:SJL}, one can observe that $c_-\thickapprox 10^{-5}$ and also take $c_+=1+\epsilon$ for arbitrary fixed
$\epsilon>0$. Numerical experiments show that Lemma \ref{le:SJL2} still holds if one takes $c_-\thickapprox 0.1$. And hence, it would
be interesting to improve the constant $c_-$ in Lemma \ref{le:SJL2}.
\end{remark}

\section{Proofs of Theorem \ref{th:SRIP} and \ref{theo-3.3}}
\begin{proof}[Proof of Theorem \ref{th:SRIP} ]
First, note that $\|A_Ix\|_2^2 \leq \|Ax\|_2^2$ for any $I\subset [m]$. And hence, according to RIP theory, there exists $1<\theta_+<2$ so that
$$
\max_{I\subset [m]}\|A_Ix\|_2^2\leq \theta_+\|x\|_2^2.
$$
holds for all $tk$-sparse $x\in \R^n$ with probability $1-\exp(-cm)$ provided $m=O(tk\log(n/k))$.
We next consider the lower bound.
Without loss of generality, we assume that $\|x\|_2=1$. So, to this end, we need to prove that there exists $\theta_-\in (0,1)$ so that
\[
\theta_-\leq \min_{I\subseteq [m], \abs{I} \geq m/2}\|A_Ix\|_2^2
\]
holds for all $x\in \R^n$ satisfying $\|x\|_2=1$ and $\abs{{{\rm supp}(x)}}\leq tk $.  We first consider a fixed support $\Omega\subset [m]$ with $\abs{\Omega}=tk$. Suppose that ${\mathcal N}_\epsilon $ is an $\epsilon$-net of
$$
\{x\in \R^n: \|x\|_2=1, {\rm supp}(x)\subset \Omega\}.
$$
Note that $\abs{{\mathcal N}_\epsilon}\leq (12/\epsilon)^{tk}$. Then, according to Lemma \ref{le:SJL}, there exists $c_->0$ such that
$$
c_-\leq \min_{I\subseteq [m], \abs{I} \geq m/2}\|A_I{ x}\|_2^2
$$
holds for all ${ x}\in {\mathcal N}_\epsilon$ with probability at least $1-\exp(tk\log(12/\epsilon)-cm)$. Take an arbitrary  $x\in \R^n$ satisfying $\|x\|_2=1, {\rm supp}(x)\subset \Omega$. There is some $y_x\in {\mathcal N}_\epsilon$ such that $\|x-y_x\|\leq \epsilon$.
Note that
\begin{eqnarray*}
\min_{I\subseteq [m], \abs{I} \geq m/2}\|A_I{ x}\|_2&=&\min_{I\subseteq [m], \abs{I} \geq m/2}\|A_I{ \left((x-y_x)+y_x\right)}\|_2\\
&\geq & \min_{I\subseteq [m], \abs{I} \geq m/2} \left(\|A_Iy_x\|_2-\|A_I(x-y_x)\|_2\right) \geq \sqrt{c_-}-\sqrt{\theta_+}\epsilon.
\end{eqnarray*}
We can take $\epsilon$ small enough so that $\sqrt{c_-}-\sqrt{\theta_+}\epsilon>0$.  Set $\theta_-:=(\sqrt{c_-}-\sqrt{\theta_+}\epsilon)^2$.
Then
$$
\min_{I\subseteq [m], \abs{I} \geq m/2}\|A_I{ x}\|_2^2 \geq \theta_-
$$
holds for all $x$ with ${\rm supp}(x)\subset \Omega$ with probability at least  $1-\exp(tk\log(12/\epsilon)-cm)$. Note that there are ${n\choose tk}$ distinct $\Omega\subset [n]$ and
$$
{n\choose tk}\leq \left(\frac{en}{tk}\right)^{tk}.
$$
 Hence
$$
\min_{I\subseteq [m], \abs{I} \geq m/2}\|A_I{ x}\|_2^2 \geq \theta_-
$$
holds for all $tk$-sparse $x\in \R^n$  with probability at least
$$
1-\exp(tk(\log(en/tk)+\log(12/\epsilon))-cm).
$$
Thus the desired SRIP property holds for all $tk$-sparse vectors with probability at least $1-\exp(-cm/2)$ provided
$$
m \geq \frac{2}{c} tk \left(\log (en/tk)+\log(12/\epsilon)\right).
$$

\end{proof}

\begin{proof}[Proof of Theorem \ref{theo-3.3}]
Fix a $k$-sparse signal $x_0\in\R^n\setminus\{0\}$ and let $b=[b_1, \dots, b_m]^\top:=A x_0$.
For any $\epsilon \in \{1,-1\}^m$ set
$b_\epsilon := [\epsilon_1b_1, \dots, \epsilon_m b_m]^\top$. As before
we consider the following minimization problem:
\begin{equation}\label{eq:min1}
     \min \|x\|_1 \mhsp {\rm s.\, t. } \mhsp A x=b_\epsilon.
\end{equation}
The solution to (\ref{eq:min1}), if it exists, is denoted as $x_\epsilon$.
We claim that for any
$\epsilon \in \{1,-1\}^m$ we must have
$$
    \|x_\epsilon\|_1\geq \|x_0\|_1
$$
if $x_\epsilon$ exists (it may not exist), and the equality holds if and only if $x_\epsilon=\pm x_0$.

Assume the claim is false. Then either $\|x_\epsilon\|_1< \|x_0\|_1$ or
$\|x_\epsilon\|_1= \|x_0\|_1$ but $x_\epsilon \neq \pm x_0$.
Observe that $|A x_\epsilon| = |A x_0|$.
So $\innerp{a_j,x_\epsilon} = \pm \innerp{a_j,x_0}$
for all $j$. Let
$$
  I:=\{j:~\innerp{a_j,x_\epsilon} = \innerp{a_j,x_0}\}.
$$
Then either $\abs{I} \geq m/2$ or $\abs{I^c} \geq m/2$. We first assume that
$\abs{I}\geq m/2$. Then $A_I x_\epsilon = A_I x_0$.  Here $A_I=[a_{j}: j\in I]^\top$. However, $A$ satisfies the strong RIP
 of order $tk$ and levels $\theta_-,\theta_+$ with $t\geq \max\{\frac{1}{2\theta_--\theta_-^2}, \frac{1}{2\theta_+-\theta_+^2}\}$.
 Consequently  $A_I$ satisfies RIP of order $tk$ and $\delta_{tk}\leq \max\{{1-\theta_-}, {\theta_+-1}\}<\sqrt{1-\frac{1}{t}}$.
 Then, using  the results in \cite{CaiZhang13} (see also Section 3), we must have
\begin{equation}\label{eq:xI0}
    \argmin{x\in \R^m}\{\|x\|_1:~A_Ix=A_Ix_0\}\,\,=\,\, x_0.
\end{equation}
Note that
$x_\epsilon\in \{x: A_Ix=A_Ix_0\}$ and hence $\|x_0\|_1\leq \|x_\epsilon\|_1$.
Based on the assumption of either $\|x_\epsilon\|_1< \|x_0\|_1$ or $\|x_\epsilon\|_1= \|x_0\|_1$, we obtain
$\|x_\epsilon\|_1=\|x_0\|_1$, which implies that $x_\epsilon =x_0$, which is a contradiction. 
In the case $\abs{I^c} \geq m/2$, similar argument yields
$x_\epsilon = -x_0$, also a contradiction. We have now proved the theorem.
\end{proof}

\section{Discussion and future directions}
We introduce a strong notion of restricted isometry and show that it naturally allows to extend compressed sensing to phaseless measurements. That is, in contrast to minimizing the $\ell_1$ norm over $Ax = Ax_0$ to recover $x_0$ whenever $A$ satisfies RIP, we show that minimizing $\ell_1$ over the larger set $|Ax| = |A x_0|$ recovers $x_0$ up to global sign when $A$ satisfies SRIP. We show that a random $m \times n$ Gaussian matrix  satisfies SRIP for $m = O(k \log (n/k))$ and thus phaseless compressed sensing is possible in a mathematical sense with $O(k \log (n/k))$ measurements. The analysis in the paper also yields an erasure robust Johnson-Lindenstrauss Lemma. All the results above are over $\mathbb{R}$. The extension of these results to hold over $\mathbb{C}$ cannot follow the same line of reasoning and is the subject of future work. Finally, we pose as an open problem to find an algorithm which provably recovers with high probability a $k$-sparse vector in $\mathbb{C}^n$ from $O(k \log (n/k))$ from phaseless Gaussian measurements.
\vspace{1cm}

{\bf Acknowledgements.}
Work on this paper began during the AIM workshop ¡°Frame theory intersects geometry¡±
in Palo Alto. We thank the organizers for their kind invitation.
We are thankful to Jonathan Kelner, Miles Lopes, Yang Wang and Rachel Ward for fruitful discussions.
\bigskip \medskip


\begin{thebibliography}{99}


\bibitem{BCE06}
R.~Balan, P.~Casazza, D.~Edidin, \emph{On signal reconstruction without phase},
 Appl. Comput. Harmon. Anal. \textbf{20} (2006), 345--356.

\bibitem{BBCE07} R.~Balan, B.~Bodmann, P.~Casazza, D.~Edidin, \emph{Painless
reconstruction from Magnitudes of Frame Coefficients}, J. Fourier Anal. Appl., \textbf{15} (4) (2009), 488--501.

\bibitem{bodmann}
B. G. Bodmann and N. Hammen,
{\em Stable phase retrieval with low-redundancy frame}, arXiv:1302.5487.


\bibitem{CanTao05}
E. J.~Cand\`es and T. Tao,{\em Decoding by linear programming}, IEEE Trans. Inf. Theory, 51(2005),
4203-4215.


\bibitem{tao1} E. J. Cand\`{e}s, J. Romberg, T. Tao, {\em Robust uncertainty
principles: Exact signal reconstruction from highly incomplte
frequency information}, IEEE Trans. Inform. Theory 52(2)(2006)489-509.

\bibitem{candes}
E. J. Cand\`{e}s, {\em The restricted isometry property and its implications for compressed sensing}, C. R. Math. Acad. Sci. Paris, Series I, 346, 589-592(2008).

\bibitem{tao}  E. J. Cand\`{e}s, J. Romberg, and T. Tao, {\em Stable signal recovery from incomplete and inaccurate measurements}, Comm. Pure Appl. Math., 59(8)(2006) 1207-1223.


\bibitem{CSV12} E.~Cand\`{e}s, T.~Strohmer, V.~Voroninski, \emph{PhaseLift: Exact and Stable
Signal Recovery from Magnitude Measurements via Convex Programming}, Communications on Pure and Applied Mathematics
66(8)(2013) 1241-1274

\bibitem{CESV12} E.~Cand\`{e}s, Y.~Eldar, T.~Strohmer, V.~Voroninski, \emph{Phase
Retrieval via Matrix Completion Problem}, SIAM Journal on Imaging Sciences,  6(1)(2013), 199-225.


\bibitem{CaiZhang13}T. Cai, A. Zhang, {\em Sparse Representation of a Polytope and Recovery of Sparse
Signals and Low-rank Matrices}, arXiv:1306.1154.

\bibitem{null5}A. Cohen, W. Dahmen and R. DeVore, \emph{ Compressed sensing and best k-term approximation}, J. Amer. Math. Soc. 22 (2009), 211-231.


\bibitem{donoho}
D. L. Donoho, {\em Compressed sensing}, IEEE Transactions on information Theory, 52(4)(2006), 1289-1306.


\bibitem{null2} D.L. Donoho and X. Huo, \emph{ Uncertainty principles and ideal atomic decompositions,}
IEEE Trans. Inform. Theory 47 (2001), 2845-2862.


\bibitem{null3}M. Elad and A.M. Bruckstein,\emph{ A generalized uncertainty principle and sparse representation in pairs of bases}, IEEE Trans. Inform. Theory 48 (2002), 2558-2567.

\bibitem{Eldar} Yonina C. Eldar and  Shahar Mendelson, \emph{Phase Retrieval: Stability and Recovery Guarantees},  arXiv:1211.0872.


\bibitem{null1}R. Gribonval and M. Nielsen, \emph{ Sparse representations in unions of bases}, IEEE Trans.
Inform. Theory 49 (2003), 3320-3325.

\bibitem{RIPJL}
Felix Krahmer and Rachel Ward, {\em New and Improved Johnson¨CLindenstrauss Embeddings via the Restricted Isometry Property},  SIAM J. Math. Anal. 43-3 (2011), 1269-1281.

\bibitem{LiV} Xiaodong Li. and V. Voroninski, \emph{Sparse Signal Recovery from Quadratic Measurements via Convex
Programming}, SIAM J. Math. Anal.  45(5)(2013), 3019-3033.


\bibitem{algorithm3} M. Moravec, J. Romberg, and R. Baraniuk, ``\emph{Compressive
phase retrieval},'' in Proceedings of SPIE. International
Society for Optics and Photonics, 2007.

\bibitem{Eldar}
H. Ohlsson,  and Y. C. Eldar, \emph{ On Conditions for Uniqueness in Sparse Phase
Retrieval}, arXiv:1308.5447.


\bibitem{null4}A. Pinkus,\emph{ On $L^1$-Approximation}, Cambridge Tracts in Mathematics 93, Cambridge
University Press, Cambridge, 1989.



\bibitem{null6}H. Rauhut,\emph{ Compressive sensing and structured random matrices}. In M. Fornasier, editor, Theoretical Foundations and Numerical Methods for Sparse Recovery, volume 9 of Radon Series Comp. Appl. Math., pages 1-92. deGruyter, 2010.


\bibitem{algorithm2} P. Schniter and S. Rangan, \emph{ Compressive Phase Retrieval via Generalized
Approximate Message Passing}, in Proceedings of Allerton Conference
on Communication, Control, and Computing, Monticello, IL, USA, Oct.
2012.



\bibitem{algorithm1} Zai Yang, Cishen Zhang, and Lihua Xie,
\emph{ Robust compressive phase retrieval via L1 minimization with application to image reconstruction, }
 arXiv:1302.0081.


\bibitem{probability}  Ledoux, M, the concentration of measure phenomenon, Mathematical Surveys and Monographs, 89. Providence: American Mathematical Society, 2005.

\bibitem{min} Y. Gordon, A. E. Litvak, C. Schutt, and E. Werner,
{\em On the minimum of several random variables},
Proceedings of the American Mathematical Soceity,  134(12)(2006), 3665-3675.

\bibitem{wangxu} Yang Wang and  Zhiqiang Xu, {\em Phase Retrieval for Sparse Signals}, arXiv:1310.0873.


\bibitem{NERF} Matthew Fickusa and Dustin G. Mixonb,{\em Numerically erasure-robust frames}, Linear Algebra and its Applications,  437(2012) 1394-1407.

\bibitem{DeVoreRIP}    Richard Baraniuk, Mark Davenport, Ronald DeVore, Michael Wakin,{\em A Simple Proof of the Restricted Isometry Property for Random Matrices}, Constr Approx (2008)28, 253-263.
\end{thebibliography}
\end{document}